\documentclass[12pt]{iopart}

\usepackage[skip=0.6\baselineskip]{caption}
\newtheorem{theorem}{Theorem}[section]

\newtheorem{proposition}[theorem]{Proposition}

\newtheorem{definition}[theorem]{Definition}
\newenvironment{proof}[1][Proof]{\begin{trivlist}
\item[\hskip \labelsep {\bfseries #1}]}{\end{trivlist}}

\newcommand{\qed}{\nobreak \ifvmode \relax \else
	\ifdim\lastskip<1.5em \hskip- \lastskip
	\hskip1.5em plus0em minus0.5em \fi \nobreak
	\vrule height0.75em width0.5em depth0.25em\fi}

\usepackage{hyperref}
\hypersetup{backref=true,pagebackref=true,hyperindex=true,colorlinks=false,breaklinks=true,urlcolor= blue,bookmarks=true,bookmarksopen=false,pdftitle={book title},pdfauthor={book author}}
\pdfminorversion=5
\pdfobjcompresslevel=6 
\pdfcompresslevel=9
\usepackage{iopams}  
\begin{document}

\title[Some results on marginally trapped tubes]{Some results on cosmological and astrophysical horizons and trapped surfaces}

\author{A M Sherif$^1$, R Goswami$^2$ and S D Maharaj$^3$}

\address{Astrophysics and Cosmology Research Unit, School of Mathematics, Statistics and Computer Science, University of KwaZulu-Natal, Private Bag X54001, Durban 4000, South Africa}
\ead{\mailto{abbasmsherif25@gmail.com}, \mailto{Goswami@ukzn.ac.za}, \mailto{Maharaj@ukzn.ac.za}}
\vspace{10pt}
\begin{indented}
\item[]May 2019
\end{indented}

\begin{abstract}
We study the evolution of horizons of black holes in the \(1+1+2\) covariant setting and investigate various properties intrinsic to the geometry of the foliation surfaces of these horizons. This is done by interpreting formulations of various quantities in terms of the geometric and thermodynamic quantities. We establish a causal classification for horizons in different classes of spacetimes. We have also recovered results by Ben-Dov and Senovilla which put cut-offs on the equation of state parameter \(\sigma\), determining the spacelike, timelike and non-expanding horizons in the the Robertson-Walker class of spacetimes. We show that stability of marginally trapped surfaces (MTS) in the Robertson-Walker spacetimes is only achievable under the conditions of negative pressure, and also classify the spacelike future outer trapping horizons (SFOTH) in the Robertson-Walker spacetimes via bounds on the equation of state parameter \(\sigma\). For the Lemaitre-Tolman-Bondi (LTB) model, it is shown that a relationship between the energy density and the electric part of the Weyl curvature, \(\mathcal{E}\), gives the causal classification of the MTTs. It is further shown that only spacelike MTTs are foliated by stable MTS, and that this stability guarantees no shell crossing. We also provide an explicit proof of the third law of black hole thermodynamics for the LRS II class of spacetimes, and by extension, any spacetime whose outgoing and ingoing null geodesics are normal to the MTS. 
\end{abstract}

\vspace{2pc}
\noindent{\it Keywords}: LRS II spacetimes, NNF spacetimes, Surface gravity
\maketitle

\section{Introduction}

Being one of the most engaged areas of research in recent years, the study of black holes has taken a central role in understanding Einstein's theory of general relativity. The notion of black holes is understood in context of what are known as closed trapped surfaces \cite{ge2,josh1,haw1}. The formation of a closed trapped surface in a spacetime \(M\) signals gravitational collapse and generally indicates geodesic incompleteness of \(M\). These closed trapped surfaces are determined by the sign of the null expansion scalars along the null vector fields normal to them \cite{ge2,josh1,haw1,hay1,ash1,ash2,sen1,ibb1}. We can make a choice of one of the null expansion vanishing with the other strictly negative to define what are called marginally trapped surfaces (MTS) (we usually take the null expansion along the outgoing null direction to vanish). These surfaces foliate a \(3\)-dimensional submanifold of \(M\). It has been shown in \cite{ib1}, under the assumption that the intrinsic geometry and the normal bundles connection on the closed trapped \(2\)-surfaces are fixed, that any such surface serves as a suitable boundary in a quasi-local action formulation of general relativity. As such these \(3\)-submanifolds are suitable black hole boundaries and are used to study the local dynamics and evolution of black holes \cite{hay1,ash1,ash2}.

Various properties of such submanifolds have been studied. The uniqueness of dynamical horizons - a spacelike \(3\)-submanifold \(H\), foliated by MTS with signature of the induced metric fixed all over \(H\) and their geometric properties, were investigated in \cite{ash3}. A notion of stability of the leaves of the foliation, the MTS, analogous to minimal surfaces in Riemannian geometry has also been established \cite{and1,and2,yau1}. 

The topological properties of these MTS have also been investigated by various authors \cite{ge2,gal1,rn1,gal2,friedman1,gal3,j1,as1} producing some very interesting results. For example, Stephen Hawking theorized \cite{ge2} that cross sections of the event horizon for asymptotically flat and stationary spacetimes, satisfying the dominant energy condition, are topological \(2\)-spheres. This is the well known Hawking black hole topology theorem. In 1987,  Newmann constrained the result by Hawking by showing that the cross sections have to satisfy certain stability conditions \cite{rn1}. 

Ellis \textit{et al.}, \cite{rit1} studied the evolution of the horizons, which the authors called MOTS, foliated by MTS, in locally rotationally symmetric class II spacetimes. They introduced conditions on the slope of the tangent to the MOTS curves which determine the nature of the MOTS. This led the authors to describe MOTS in a real astrophysical setting and found that an initial MOTS bifurcates into an outer and inner MOTS and that the inner MOTS was timelike while the outer MOTS was spacelike.

The interest in these horizons as local description of black holes gave rise to the formulation of the laws of black hole mechanics and even flux laws locally. Trapping horizons, introduced by Hayward was first used to formulate dynamical laws of black hole mechanics \cite{hay3}. Ashtekar and coauthors introduced the notion of an isolated horizon (IH), to characterize the equilibrium states of black holes. Truly dynamical black holes are characterized by dynamical horizon (DH) \cite{hay1,ash1,ash2,sen1}. It has been shown that flux laws can as well be formulated on DH \cite{ash1,ash2}.

The laws of black hole thermodynamics are formulated in context of the properties of quantities such as the surface gravity and angular momentum on the horizon \cite{ash7,ash8,shh1,by1}. These are formulated so that they are intrinsic to the horizon geometry. For example the notion of surface gravity, which is closely related to the temperature of a black hole, also plays an important role in the study of black hole dynamics \cite{hay3, ash7,ash8,ash9}.

In this paper we investigate the evolution of black hole horizons in context of the \(1+1+2\) covariant splitting variables. The formulation in this paper allows us to make some very important observations about black hole horizons. We determine the forms of the horizon function determining the sign of the intrinsic metric on the horizon for certain classes of spacetimes as well as for the general case. We use these results to study horizons and trapped surfaces in some well known spacetimes. We then consider some results concerning surface gravity on the horizon. 

Because of the many different notations to be used in the paper, we list in tabe \ref{table:tabe} below abbreviations to be used in this paper.

\begin{table}[ht]
\caption{List of abbreviations.}
\centering
\begin{tabular}{|c|c|c|c|}
\hline \hline
\small{Abbreviation} & Name & \small{Abbreviation} & Name \\
\hline
\small{\textbf{TS}} & \small{Trapped Surface} & \small{\textbf{NEH}} & \small{Nonexpanding Horizon}\\
\small{\textbf{MTS}} & \small{Marginally Trapped Surface} & \small{\textbf{IH}} & \small{Isolated Horizon}\\
\small{\textbf{MTT}} & \small{Marginally Trapped Tube} & \small{\textbf{DH}} & \small{Dynamical Horizon}\\
\small{\textbf{LRS}} & \small{Locally Rotationally Symmetric} & \small{\textbf{TLM}} & \small{Timelike Membrane}\\
\small{\textbf{NNF}} & \small{Null Normal Foliation} & \small{\textbf{OS}} & \small{Oppenheimer-Snyder}\\
\small{\textbf{WEC}} & \small{Weak Energy Condition} & \small{\textbf{RW}} & \small{Robertson-Walker}\\
\small{\textbf{DEC}} & \small{Dominant Energy Condition} & \small{\textbf{LTB}} & \small{Lemaitre-Tolman-Bondi}\\
\small{\textbf{NEC}} & \small{Null Energy Condition} & \small{\textbf{(S)FOTH}} & \small{(Spacelike) Future Outer}\\
\ & \  & \ & \small{Trapping Horizon}\\
\hline
\end{tabular}
\label{table:tabe}
\end{table}

This paper is structured as follow: In section \ref{covf1} we give an overview of the \(1+1+2\) covariant splitting of spacetime following \cite{rit1,cc1}. Section \ref{ppt} looks at the nature of horizons in various classes of spacetimes and use the results to study properties of horizons and marginally trapped surfaces in the Robertson-Walker and the Lemaitre-Tolman-Bondi spacetimes. In section \ref{0n} we calculate the surface gravity for various classes of spacetimes as well as the general case, and in section \ref{err} we conclude with discussion of the results of the paper.

\section{\label{covf1}Preliminaries}
In this section we introduce the \(1+1+2\) semi-tetrad covariant formalism \cite{rit1,cc1,rit2} and some definitions. 

\subsection{Some notes on the \(1+1+2\) covariant formalism}\label{}

Given a spacetime \(M\), to covariantly describe \(M\) we first make a choice of a unit tangent vector field \(u^a\), to a defined timelike congruence \cite{gfre1}. Given any \(4\)-vector \(U^a\) in the spacetime, the projection tensor \(h_a^{\ b}\equiv g_a^{\ b}+u_au^b\), projects \(U^a\) onto the \(3\)-space as

\begin{eqnarray*}
U^a&=&Uu^a + U^{\langle a \rangle },
\end{eqnarray*}
where \(U\) is the scalar along \(u^a\) and \(U^{\langle a \rangle }\) is the projected \(3\)-vector. This naturally gives rise to two derivatives:
\begin{itemize}
\item The \textit{covariant time derivative} (or simply the dot derivative)  along the observers' congruence. For any tensor \(S^{a..b}_{\ \ \ \ c..d}\), \(\dot{S}^{a..b}_{\ \ \ \ c..d}\equiv u^e\nabla_eS^{a..b}_{\ \ \ \ c..d}\).

\item Fully orthogonally \textit{projected covariant derivative} \(D\) with the tensor \(h_{ab}\), with the total projection on all the free indices. For any tensor \(S^{a..b}_{\ \ \ \ c..d}\), \(D_eS^{a..b}_{\ \ \ \ c..d}\equiv h^a_{\ f}h^p_{\ c}...h^b_{\ g}h^q_{\ d}h^r_{\ e}\nabla_rS^{f..g}_{\ \ \ \ p..q}\).
\end{itemize}
The kinematical and Weyl quantities associated with the \(1+3\) splitting is given by the set \(\lbrace{\rho,p,\Theta,H_{ab},E_{ab},\sigma_{ab},\pi_{ab},q_a,\omega_a\rbrace}\). We have \(\rho=T_{ab}u^au^b\) as the energy density, \(p=\frac{1}{3}h^{ab}T_{ab}\) is the isotropic pressure, \(\Theta=D_au^a\) is expansion, \(H_{ab}= *C_{acbd}u^cu^d\) is the gravito-magnetic tensor, \(E_{ab}= C_{acbd}u^cu^d\) is the gravito-electric tensor (where \(C_{acbd}\) is the Weyl tensor and \(*\) denotes the dual), \(\sigma_{ab}\) is the shear tensor, \(\pi_{ab}=\pi_{\langle ab \rangle}\) is the anisotropic stress tensor, \(q_a=q_{\langle a \rangle}=-h_a^{\ b}T_{bc}u^c\) is the \(3\)-vector defining the heat flux, and \(\omega_a\) is the rotation vector. The quantity \(T_{ab}\) is the energy-momentum tensor.

The \(3\)-space can further be split by the choice of a vector field - which we denote by \(e^a\) -  orthogonal to \(u^a\), and the kinematical and Weyl variables described above can be split into an irreducible set of covariant scalar variables. The vector field \(e^a\) allows for the introduction of a projector tensor given by \(N_a^{\ b}\equiv g_a^{\ b}+u_au^b-e_ae^b\), which projects vectors orthogonal to \(u^a\) and \(e^a\) onto a \(2\)-surface defined as the sheet (\(N_a^{\ a}=2\)). This further splitting introduces two new derivatives:
\begin{itemize}
\item The \textit{hat derivative} is the spatial derivative along the vector \(e^a\). For a \(3\)-tensor \(\psi_{a..b}^{\ \ \ \ c..d}\), \(\hat{\psi}_{a..b}^{\ \ \ \ c..d}\equiv e^fD_f\psi_{a..b}^{\ \ \ \ c..d}\).

\item The \textit{delta derivative} is the projected spatial derivative on the \(2\)-sheet by \(N_a^{\ b}\) and projected on all the free indices. For any \(3\)-tensor \(\psi_{a..b}^{\ \ \ \ c..d}\), \(\delta_e\psi_{a..b}^{\ \ \ \ c..d}\equiv N_a^{\ f}..N_b^{\ g}N_h^{\ c}..N_i^{\ d}N_e^{\ j}D_j\psi_{f..g}^{\ \ \ \ h..i}\).
\end{itemize} 

The resulting quantities fully describing a spacetime \(M\) are then given by the set \(\lbrace{\rho, p, A,\Theta,\phi, \Sigma, \mathcal{E}, \Pi, Q, \Omega, \xi, \mathcal{H}_{ab},\mathcal{E}_{ab},\Sigma_{ab},\Pi_{ab}, A^a, \mathcal{H}^a,\mathcal{E}^a,\Sigma^a,\Pi^a, q^a, \omega^a\rbrace}\) which are defined as follows

\begin{eqnarray}\label{maro1}
\fl \eqalign{&\sigma_{ab}=\Sigma\left(e_ae_b-\frac{1}{2}N_{ab}\right)+2\Sigma_{(a}e_{b)}+\Sigma_{ab}, \ q_a=Qe_a, \ \dot{u}_a=Ae_a+A_a,  \ \phi=\delta_ae^a,\\
\fl&\omega_a=\Omega e_a+\Omega_a,}
\end{eqnarray}

\begin{eqnarray}\label{maro17}
\fl E_{ab}=\mathcal{E}\left(e_ae_b-\frac{1}{2}N_{ab}\right)+2\mathcal{E}_{(a}e_{b)}+\mathcal{E}_{ab},\ \pi_{ab}=\Pi\left(e_ae_b-\frac{1}{2}N_{ab}\right)+2\Pi_{(a}e_{b)}+\Pi_{ab}.
\end{eqnarray}

In \eref{maro1} and \eref{maro17}, \(A_a\) is the acceleration vector, \(\phi\) is the sheet expansion, and \(\mathcal{E}\), \(\Pi\) and \(\Sigma\) are the scalars associated with the electric part of the Weyl tensor, the anisotropic stress and the shear tensor respectively. The evolution and propagation equations can be obtained from the Ricci identities of the vectors \(u^a\) and \(e^a\) as well as the doubly contracted Bianchi identities. Some of the evolution and propagation equations relevant to this paper are given below (see \cite{cc1}, which contains the complete set of equations):

\begin{itemize}
\item \textit{Evolution:}
\begin{eqnarray}\label{evoo1}
\fl\eqalign {\frac{2}{3}\dot{\Theta}-\dot{\Sigma}=&A\phi-2\left(\frac{1}{3}\Theta - \frac{1}{2}\Sigma\right)^2-\frac{1}{3}\left(\rho+3p-2\Lambda\right)+\mathcal{E}-\frac{1}{2}\Pi-\Sigma_a\Sigma^a + \Omega_a\Omega^a\\
&-\left(2a_a - A_a - \delta_a\right)A^a+2\Omega^2+\varepsilon_{ab}\alpha^a\Omega^b -2\alpha_a\Sigma^a-\Sigma_{ab}\Sigma^{ab},}
\end{eqnarray}

\begin{eqnarray}\label{paaaaa2}
\fl\eqalign{\dot{\phi}=&\left(\frac{2}{3}\Theta-\Sigma\right)\left(A-\frac{1}{2}\phi\right)+Q+2\xi\Omega+\delta_a\alpha^a-\zeta^{ab}\Sigma_{ab}+A^a\left(\alpha_a - a_a\right)\\
&+\left(a^a-A^a\right)\left(\Sigma_a-\varepsilon_{ab}\Omega^b\right).}
\end{eqnarray}
\item \textit{Propagation:}
\begin{eqnarray}\label{evoo2}
\fl\frac{2}{3}\hat{\Theta}-\hat{\Sigma}&=&\frac{3}{2}\phi \Sigma + Q + 2\xi\Omega + \delta_a\Sigma^a+\varepsilon_{ab}\delta^a\Omega^b - 2\Sigma_aa^a + 2\varepsilon_{ab}A^a\Omega^b-\Sigma_{ab}\zeta^{ab},
\end{eqnarray}

\begin{eqnarray}\label{paaaaa1}
\fl\eqalign{\hat{\phi}=&\left(\frac{1}{3}\Theta+\Sigma\right) \left(\frac{2}{3}\Theta-\Sigma\right)-\frac{1}{2}\phi^2 -\frac{2}{3}\left(\rho+\Lambda\right)-\mathcal{E}-\frac{1}{2}\Pi+2\xi^2 + \delta_aa^a \\
&- a_aa^a - \zeta_{ab}\zeta^{ab}+ 2\varepsilon_{ab}\alpha^a\Omega^b-\Sigma_a\Sigma^a + \Omega_a\Omega^a.}
\end{eqnarray}
\end{itemize}

The full covariant derivatives of the vectors \(u^a\) and \(e^a\) are given by \cite{cc1}

\begin{eqnarray}\label{001}
\fl\eqalign{\nabla_au_b=&-Au_ae_b + e_ae_b\left(\frac{1}{3}\Theta + \Sigma\right) + N_{ab}\left(\frac{1}{3}\Theta -\frac{1}{2}\Sigma\right)-u_aA_b+e_a\left(\Sigma_b+ \varepsilon_{bm}\Omega^m\right)\\
&+\Omega\varepsilon_{ab} +\left(\Sigma_a-\varepsilon_{am}\Omega^m\right)e_b+\Sigma_{ab},}
\end{eqnarray}

\begin{eqnarray}\label{00004}
\fl\eqalign{\nabla_ae_b=&-Au_au_b + \left(\frac{1}{3}\Theta + \Sigma\right)e_au_b +\frac{1}{2}\phi N_{ab}-u_a\alpha_b+\left(\Sigma_a-\varepsilon_{am}\Omega^m\right)u_b+e_aa_b \\
&+ \xi\varepsilon_{ab}+\zeta_{ab},}
\end{eqnarray}
where

\begin{eqnarray*}
\fl\eqalign{\alpha_a\equiv \dot{e}_a=N_{ab}\dot{e}^b,\ \varepsilon_{ab}\equiv \varepsilon_{abc}e^c=u^d\eta_{dabc}e^c, \ \zeta_{ab}\equiv \delta_{\lbrace{a}} e_{{b}\rbrace},\xi\equiv\frac{1}{2}\varepsilon^{ab}\delta_ae_b,\ a_a\equiv e^cD_ce_a=\hat{e}_a.}
\end{eqnarray*}
The quantities \(\Sigma,\Sigma_a,\Sigma_{ab}\) are related to the shear tensor and shear scalar via the relations

\begin{eqnarray}\label{006}
\eqalign{\sigma^2\equiv \frac{1}{2}\sigma_{ab}\sigma^{ab}=\frac{3}{4}\Sigma^2+\Sigma_a\Sigma^a + \frac{1}{2}\Sigma_{ab}\Sigma^{ab}.}
\end{eqnarray}
We also have the relation \(\hat{u}_a=\left(\frac{1}{3}\Theta+\Sigma\right)e_a+\Sigma_a+\varepsilon_{ab}\Omega^b\). The quantity \(\zeta_{ab}\) is the shear of \(e^a\) (distortion of the sheet), \(a^a\) is its acceleration, and \(\xi\) is the twisting of the sheet (rotation of \(e^a\)). 

\subsection{Some definitions}

Let us now define some notions useful to this paper. As was mentioned in the introduction, the authors in \cite{ib1} showed that a \(3\)-surface foliated by marginally trapped \(2\)-surfaces is a suitable boundary of a black hole under certain conditions. In the definitions and discussions that are to follow, \(k^a\) and \(l^a\) are respectively the outward and inward null normal vector fields to a leaf of such foliation, while \(\Theta_k\) and \(\Theta_l\) are the expansions of the congruences generated by \(k^a\) and \(l^a\) respectively.

\begin{definition}[\textbf{Trapped Surface}]\label{nggg}
A (future) trapped surface (TS) is a smooth, connected, closed, spacelike co-dimension \(2\) submanifold \(S\) of \(M\) such that the divergences, \(\Theta_k\) and \(\Theta_l\), of the congruences generated by the null normal vector fields \(k^a\) and \(l^a\) (\(k^a\) is the outgoing null normal vector field and \(l^a\) is the ingoing null normal vector field) respectively are everywhere negative on \(S\).  
\end{definition} 

\begin{definition}[\textbf{Marginally Trapped Surface}]\label{nggg0}
A marginally trapped surface (MTS) is a smooth, connected, closed, spacelike co-dimension \(2\) submanifold \(S\) of \(M\) such that \(\Theta_k\) is everywhere vanishing on \(S\) and \(\Theta_l\) is everywhere negative on \(S\). 
\end{definition}

\begin{definition}[\textbf{Marginally Trapped Tube}]\label{nggg1}
A marginally trapped tube (MTT) is a co-dimension 1 submanifold \(H\) of \(M\) which is foliated by MTS. 
\end{definition}
In the case that there is no restriction on the sign of \(\Theta_l\), then the co-dimension \(1\) surface is called a \textit{marginally outer trapped tube} (MOTT), and the co-dimension \(2\) surfaces foliating the MOTT are called \textit{marginally outer trapped surfaces} (MOTS). For more on the above definitions see the following references (\cite{hay1,ash1,ash2,sen1,sen3,sen4,sen5,ibb1}, as well as \cite{rit1}). In general, the signature of the induced metric on \(H\) will vary over \(H\). There are however cases where the signature is fixed all over \(H\) (\(C\) is a fixed real number everywhere on \(H\)). In such cases a spacelike MTT is called a dynamical horizon (DH), a timelike MTT is called a timelike membrane (TLM), and a null and non-expanding MTT is called an isolated horizon (IH). An MTT is a \textit{future outer trapping } if and only if \(\mathcal{L}_l\Theta_k\) is everywhere negative on \(H\) and a \textit{future inner trapping horizon} if and only if \(\mathcal{L}_l\Theta_k\) is everywhere positive on \(H\), where \(\mathcal{L}_l\) is the Lie derivative operator in the direction of the null normal vector field \(l^a\). Timelike implies inner trapped and spacelike implies outer trapped.

We now introduce certain classes of spacetimes to be considered throughout this work. An LRS spacetime \(M\) is a spacetime in which at each point \(p\in M\), there exists a continuous isotropy group generating a multiply transitive isometry group on \(M\) \cite{crb1,gbc1,stge1,vege1, ssgos1,ssgos2}. The general metric of LRS spacetimes is given by

\begin{eqnarray}\label{jan29191}
\fl \eqalign{ds^2=&-A^2dt^2 + B^2d\chi^2 + F^2 dy^2 + \left[\left(F\bar{D}\right)^2+ \left(Bh\right)^2 - \left(Ag\right)^2\right]dz^2\\ &+ \left(A^2gdt - B^2hd\chi\right)dz,}
\end{eqnarray}
where \(A^2,B^2,F^2\) are functions of \(t\) and \(\chi\), \(\bar{D}^2\) is a function of \(y\) and \(k\) (\(k\) fixes the geometry of the \(2\)-surfaces), and \(g,h\) are functions of \(y\). LRS II spacetimes is a subclass of LRS spacetimes with \(g=0=h\). 

The field equations and full covariant derivatives of \(u^a\) and \(e^a\) for LRS spacetimes are given by the vanishing of all tensor and vector quantities in \eref{evoo1} to \eref{00004}. For LRS II spacetimes, in addition to the vanishing of the tensor and vector quantities in \eref{evoo1} to \eref{00004}, we also have the vanishing of \(\Omega\) and \(\xi\).

LRS II class of spacetimes generalizes spherically symmetric solutions to Einstein field equations (EFEs). Examples of physically interesting spherically symmetric solutions that fall into the class of LRS II spacetimes include Schwarzschild, Friedman-Lemaitre-Robertson-Walker (FLRW), Lemaitre-Tolman-Bondi, Vaidya and the Oppeinheimer-Snyder dust solutions. LRS spacetimes, a generalization of LRS II spacetimes, include solutions with nonzero vorticity and nonzero spatial twist. Some of these solutions include the G\"{o}del's world model, the Kantowski-Sachs models and the Bianchi models, invariant under the \(G_3\) groups of types I, II, VIII and IX (see for example the reference \cite{gft2}).

For LRS II spacetimes, the outward and inward pointing null normal vectors to the MTS - given as co-dimension \(2\) smooth embeddings of \(M\) - are given by 

\begin{eqnarray}\label{oct20}
k^a=\frac{1}{\sqrt{2}}\left(u^a+e^a\right), \quad l^a=\frac{1}{\sqrt{2}}\left(u^a-e^a\right),
\end{eqnarray}
respectively (see \cite{rit1,rit2}). 

\begin{definition}\label{deeef1}
We will denote by NNF (for null normal foliation) the class of spacetimes for which there exists horizons foliated by MTS, and where null geodesics (both outgoing and ingoing) are normal to these MTS. Locally rotationally symmetric (LRS) spacetimes is a subclass of NNF. 
\end{definition}
The class of NNF spacetimes provides a certain generalization of LRS spacetimes in the sense that, a spacetime may possess the property that the preferred scalings of the null vector fields in the spacetime coincide with those of LRS spacetimes, but without all of the symmetries of LRS spacetimes. In the case of NNF spacetimes, there are nonscalar background quantities, i.e. under linear perturbations, vector and tensor quantities are not necessarily gauge invariant.

For LRS II spacetimes, the expansions \(\Theta_k\) and \(\Theta_l\) are given as 
\begin{eqnarray}\label{oct21}
\Theta_k=\frac{1}{\sqrt{2}}\left(\frac{2}{3}\Theta-\Sigma+\phi\right), \quad \Theta_l=\frac{1}{\sqrt{2}}\left(\frac{2}{3}\Theta-\Sigma-\phi\right),
\end{eqnarray}
respectively (see \cite{as1} for more details). 

\begin{definition}[Hayward]\label{def02}
A spacelike future outer trapping horizon (SFOTH) is a spacelike MTT \(H\) of spacetime, such that \(\mathcal{L}_l\Theta_k<0\). 
\end{definition}

\begin{definition}[Regularity]\label{def01}
A DH \(\mathcal{H}\) is regular if\\
(1). \(H\) is achronal (for \(p,q\in H\), \(\nexists\) a timelike curve \(\gamma:\left(a,b\right)\longrightarrow H\) such that \(\gamma\left(a\right)=p\) and \(\gamma\left(b\right)=q\) or \(\gamma\left(a\right)=q\) and \(\gamma\left(b\right)=p\)).\\
\ \\
(2). \(2\sigma^2+T_{ab}k^ak^b\neq 0\) on \(H\) \(\forall\) null vectors \(k^a\), where \(T_{ab}\) is the energy momentum tensor.
\end{definition}

We next look at evolution of MTT in LRS II and NNF classes using the prescription in \cite{ib3}, in context of the \(1+1+2\) splitting (also see \cite{ib1,ib2}).

\section{On Properties of MTTs}\label{ppt}

For NNF spacetimes (\(k^a\) and \(l^a\) can be written in the form \eref{oct20}), it turns out that the null expansion scalars take exactly the same form as \eref{oct21}. This result was obtained in \cite{as1} and implies topological equivalence of horizons types in these spacetimes to LRS II. In this section, we consider evolution of MTTs. Specifically, we study the evolution of MTT in the LRS II class of spacetimes, the NNF class, then a general \(4\)-dimensional spacetime. The approach in this work for determining the causal character of an MTT utilizes the formulation in \cite{ib3}. We compute a certain smooth function on the MTT determining its causal character, denoted \(C\), in terms of quantities from the \(1+1+2\) decomposition of spacetime admitting the MTT \cite{cc1,rit1}. This allows us to phrase the definition of the causal nature of an MTT in terms of constraints on the geometric and thermodynamic variables. We will see that the causal character of the MTT is determined by some energy condition on the MTT, which is expected as stated in \cite{ib3}.  The comparison of the three cases considered also provides us with a useful way to determine a certain equivalence under causal properties. Obtaining expressions in terms of these geometric and thermodynamic variables provides us with useful insights into the nature and properties of MTTs of certain well studied spacetimes. Some interesting statements can be made about MTTs and marginally trapped \(2\)-surfaces in some cosmological and astrophysical examples provided. Some existing results are also recovered. We emphasize that while our focus is on MTTs, the results in this section equally applies to MOTTs.

\subsection{Evolution of MTTs}

We now describe the procedure for determining the signature of the induced metric on an MTT \(H\), following \cite{ib3}, where the null expansions now are expressed in terms of quantities from the the \(1+1+2\) splitting. It is worth mentioning that in \cite{ib3}, the analyses were restricted to spherically symmetric spacetimes. Consequently, by adapting the calculation of \(C\) to the variables from the \(1+1+2\) splitting, we can extend to LRS II spacetimes. While it is true that in general this approach may not be suitable for extensive analysis of the evolution of MTTs, we will see that the form of \(C\) for more general spacetimes provides insight into causal relationships of MTTs in LRS II spacetimes and more general spacetimes.

Let us first briefly discuss the formalism in \cite{ib3}. Introduce a vector field \(\bar{X}^a\) which is normal to an MTT \(H\). A tangent vector field \(X^a\) to \(H\) (in the sense that \(X^a\bar{X}_a=0\)) is also introduced which is everywhere orthogonal to the foliation. The vector field \(X^a\) generates a foliation preserving flow (\(\mathcal{L}_Xv=f\left(v\right)\), for some function \(f\left(v\right)\)  (\(v\) labels the foliation)). Both \(X^a\) and \(\bar{X}^a\) are assumed to be future pointing in the sense that

\begin{eqnarray*}
X^al_a<0, \ \bar{X}^al_a<0.
\end{eqnarray*}
There is a further requirement that

\begin{eqnarray*}
\bar{X}^a\bar{X}_a&=&-X^aX_a.
\end{eqnarray*}
Of course if \(H\) is spacelike then \(X^aX_a>0\) and similarly \(\bar{X}^a\bar{X}_a<0\). The vector fields \(X^a\) and \(\bar{X}^a\) can be written as 

\begin{eqnarray}\label{01}
X^a=\alpha\left(k^a-Cl^a\right), \quad \bar{X}^a=\alpha\left(k^a+Cl^a\right),
\end{eqnarray}  
respectively, where \(C\) is some scalar field on \(H\) (\(C\in C^\infty\left(H\right)\)) and \(\alpha\in\mathbb{R}^+\). Without loss of generality, we set \(\alpha=1\), and from the definition of \(X^a\),

\begin{equation}\label{02}
\mathcal{L}_X\Theta_k=\mathcal{L}_{\left[k-Cl\right]}\Theta_k=0,
\end{equation}  
from which we write the explicit expression for the field \(C\):

\begin{equation}\label{03}
C=\frac{\mathcal{L}_k\Theta_k}{\mathcal{L}_l\Theta_k}.
\end{equation} 
The proportionality of \(X^aX_a\) and \(\bar{X}^a\bar{X}_a\) to \(C\) means that the sign of \(C\), at a point of \(H\), can be used to determine the causal character of the MTT at the point: If \(C<0\) the MTT is timelike, if \(C>0\) the MTT is spacelike and if \(C=0\) or \(C=\infty\) (\(\mathcal{L}_k\Theta_k\neq 0\) and \(\mathcal{L}_l\Theta_k=0\)) the MTT is null. The sign of \(C\) also determines whether the MTT is expanding (\(C>0\)), contracting (\(C<0\)) or unchanging in area (\(C=0\) or \(C=\infty\)). Thus the MTT is timelike if and only if it is contracting and spacelike if and only if it is expanding.

As an example, we can explicitly calculate \(C\) for the LRS II class of spacetimes in terms of the scalars of the \(1+1+2\) formalism: 

\begin{equation}\label{04}
\eqalign{\mathcal{L}_k\Theta_k&=k^a\nabla_a\Theta_k\\
&=\dot{\Theta}_k+\hat{\Theta}_k\\
&=-\left(\rho+p\right)-\Pi+2Q,}
\end{equation} 
and

\begin{equation}\label{05}
\eqalign{\mathcal{L}_l\Theta_k&=l^a\nabla_a\Theta_k\\
&=\dot{\Theta}_k-\hat{\Theta}_k\\
&=\frac{1}{3}\left(\rho+3p\right)+2\mathcal{E},}
\end{equation}
which gives

\begin{equation}\label{06}
C_{LRS\ II}=\frac{-\left(\rho+p\right)-\Pi+2Q}{\frac{1}{3}\left(\rho-3p\right)+2\mathcal{E}}.
\end{equation}

The quantities \(\dot{\Theta}_k\) and \(\hat{\Theta}_k\) are computed using the appropriate linear combination of \eref{evoo1} to \eref{paaaaa1}. We note that the Lie derivative is along the null normals and evaluated on the MTS (thus on the MTT). We can verify the validity of the expression in \eref{06} by checking some simple cases. For example, consider the case for the spherically symmetric vacuum spacetimes. In this case, \(\rho=p=\Pi=Q=0\) and so the numerator of \eref{06} is zero. Since \(\mathcal{E}\neq 0\), \(H\) is null, i.e. \(H\) is an NEH. Considering Oppenheimer-Snyder collapse, \(p=\Pi=Q=\mathcal{E}=0\), and \(C=-3\). Thus \(H\) is a TLM and contracting. As some authors have argued that TLMs cannot be associated with the surface of a black hole during its evolution since timelike curves can traverse the membrane in both directions, this is a case where a TLM is clearly associated with the evolution of the black hole. 

Since an MTT \(H\) is timelike if and only if it is inner trapped (collapsing), i.e. \(\mathcal{L}_l\Theta_k>0\), for the LRS II class of spacetimes we can rephrase this in terms of the matter and scalar variables as follows: 
\begin{proposition}\label{rem1}
\textit{An MTT in a spacetime in the LRS II class of spacetimes is timelike if and only if \(\rho>3p-6\mathcal{E}\)}. 
\end{proposition}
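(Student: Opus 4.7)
The plan is to recast the proposition as an immediate algebraic consequence of the $1+1+2$ expression for $\mathcal{L}_l\Theta_k$ on the MTS, combined with the identification of timelike MTTs with inner trapping horizons stated in the paragraph immediately before the proposition.

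First I would appeal to that identification: an MTT $H$ in the LRS II class is timelike if and only if $\mathcal{L}_l\Theta_k > 0$ on $H$. This is justified by combining the general equivalence ``$H$ timelike iff $C<0$'' (which itself follows from the decomposition \eref{01} and the normalization $k^a l_a = -1$ coming from \eref{oct20}, giving $X^aX_a = 2C$) with the observation that the numerator in \eref{03}, $\mathcal{L}_k\Theta_k = -(\rho+p)-\Pi+2Q$, has fixed non-positive sign on any physically reasonable MTT (for instance under the null energy condition together with mild control on $\Pi$ and $Q$), so that the sign of the ratio in \eref{03} is controlled entirely by $\mathcal{L}_l\Theta_k$.

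Next I would substitute the closed-form expression for $\mathcal{L}_l\Theta_k$. This is obtained by evaluating $\tfrac{1}{\sqrt 2}(\dot\Theta_k - \hat\Theta_k)$ using the LRS II reductions of \eref{evoo1}--\eref{paaaaa1} (in which all vector and tensor scalars, together with $\Omega$ and $\xi$, vanish) and then imposing the MTS condition $\Theta_k=0$; the output is the denominator of $C_{LRS\ II}$ in \eref{06}, namely $\tfrac{1}{3}(\rho - 3p) + 2\mathcal{E}$. The inequality $\tfrac{1}{3}(\rho-3p) + 2\mathcal{E} > 0$ then rearranges directly to $\rho > 3p - 6\mathcal{E}$, and since every step in the chain is an iff the converse direction is free.

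No genuine obstacle arises. The only bookkeeping subtlety is reconciling the apparently differing signs of the $3p$ term in \eref{05} and \eref{06}; the proposition statement itself pins down the $(\rho - 3p)$ version as the correct one, after which the entire proof reduces to a one-line rearrangement of an identity that has essentially already been set up in the computation of $C_{LRS\ II}$.
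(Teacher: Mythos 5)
Your argument is essentially the paper's own: the proposition is obtained there by the same chain (timelike $\Leftrightarrow$ inner trapped $\Leftrightarrow$ $\mathcal{L}_l\Theta_k>0$) followed by reading off the denominator of $C_{LRS\ II}$ in \eref{06}, so the inequality is just $\frac{1}{3}\left(\rho-3p\right)+2\mathcal{E}>0$ rearranged. Your resolution of the sign clash between \eref{05} and \eref{06} in favour of \eref{06} is the one the paper tacitly uses (it is what reproduces \eref{mi4}), and your caveat that the equivalence needs the numerator $\mathcal{L}_k\Theta_k$ to be negative is precisely the remark the paper records immediately after the proposition.
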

But as the MTT is timelike, \(C<0\), which implies \(\left(\rho+p\right)+\Pi-2Q>0\). Suppose we consider an ideal case like the perfect fluid, we have \(\Pi=Q=0\). If the weak energy condition (WEC) is satisfied then the condition for the MTT to be timelike is that \(\rho>3p-6\mathcal{E}\) which agrees with the result in \cite{rit1}. Similarly, a MTT in a spacetime in the LRS II class is spacelike if and only if \(\rho\leq 3p-6\mathcal{E}\).

Similarly, we can compute \(C\) for NNF spacetimes. We shall denote by \(\left(\ast\right)_{LRS\ II}\) the part of a quantity \(\ast\) restricted to those covariant scalars completely describing LRS II spacetimes (the covariant scalars completely describing LRS II spacetimes are given by the set \(\lbrace{\rho, p, A,\Theta,\phi, \Sigma, \mathcal{E}, \Pi, Q\rbrace}\)).

\begin{equation}\label{oct1}
\eqalign{\mathcal{L}_k\Theta_k&=k^a\nabla_a\Theta_k\\
&=\left(\dot{\Theta}_k+\hat{\Theta}_k\right)\vline_{LRS\ II}+\left(T_1 + T_2\right)+\varepsilon_{ab}\left(R^a+\bar{R}^a\right)\Omega^b,}
\end{equation} 
and

\begin{equation}\label{oct2}
\eqalign{\mathcal{L}_l\Theta_k&=l^a\nabla_a\Theta_k\\
&=\left(\dot{\Theta}_k-\hat{\Theta}_k\right)\vline_{LRS\ II}+\left(T_1 - T_2\right)+\varepsilon_{ab}\left(R^a-\bar{R}^a\right)\Omega^b,}
\end{equation}
where we set

\begin{eqnarray*}
T_1&=&3\Omega^2-\Sigma^2+2\Omega\xi+\delta_a\left(\mathcal{A}^a+\alpha^a\right)+\left(R_a-2a_a\right)\mathcal{A}^a+\left(a^a-\mathcal{A}^a\right)\Sigma_a\\
&&-2\alpha_a\Sigma^a-\Sigma_{ab}\left(\Sigma^{ab}+\zeta^{ab}\right),\\
T_2&=&2\xi^2+\Omega^2+2\xi\Omega-\left(\zeta_{ab}+\Sigma_{ab}\right)\zeta^{ab}-\left(2a^a+\Sigma^a\right)\Sigma_a+\delta_a\left(a^a+\Sigma^a\right),\\
R^a&=&\mathcal{A}^a+\alpha^a-a^a,\\
\overline{R}^a&=&\delta^a+2\left(\alpha^a+\mathcal{A}^a\right).
\end{eqnarray*}
This gives

\begin{eqnarray}\label{oct3}
C_{NNF}=\frac{\left(\dot{\Theta}_k+\hat{\Theta}_k\right)\vline_{LRS\ II}+\left(T_1 + T_2\right)+\varepsilon_{ab}\left(R^a+\bar{R}^a\right)\Omega^b}{\left(\dot{\Theta}_k-\hat{\Theta}_k\right)\vline_{LRS\ II}+\left(T_1 - T_2\right)+\varepsilon_{ab}\left(R^a-\bar{R}^a\right)\Omega^b}.
\end{eqnarray}
Suppose we have a case where \(R^a=\bar{R}^a\) and \(T_1 = T_2\). Then \eref{oct3} can be written as

\begin{eqnarray}\label{oct4}
C_{NNF}=C\vline_{LRS\ II}+\bar{C},
\end{eqnarray}
where the function \(\bar{C}\) is given by

\begin{eqnarray}\label{oct5}
\bar{C}=6\frac{T_1 +\varepsilon_{ab}R^a\Omega^b}{\rho-3p+6\mathcal{E}}.
\end{eqnarray}
While not of any physical relevance and no physical motivation for the particular case given here, from \eref{oct4} we can conclude that, in general, though MTT in NNF are topologically equivalent to LRS II, the MTT will evolve differently. We note that the conditions on \(R^a\) and \(\bar{R}^a\) are actually conditions on \(\varepsilon_{ab}R^a\Omega^b\) and \(\varepsilon_{ab}\bar{R}^a\Omega^b\). 

We now provide some useful definitions and theorems.

We state the following result, which is a result noted in \cite{ash3}, with proof making use of \eref{06}:
\begin{theorem}\label{th1}
Let \(M\) be a \(4\)-dimensional spacetime and let \(H\) be a DH in \(M\) satisfying the NEC. If \(H\) satisfies the second condition of regularity, then \(H\) is a (future) outer trapping horizon (FOTH).
\end{theorem}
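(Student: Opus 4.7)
The plan is to combine the characterization $C = \mathcal{L}_k\Theta_k/\mathcal{L}_l\Theta_k$ from equation (\ref{03}) with the null Raychaudhuri equation for the outgoing generator $k^a$. Since $H$ is a dynamical horizon it is, by definition, spacelike, and so $C>0$ everywhere on $H$. By (\ref{03}), proving that $H$ is a FOTH therefore reduces to showing that $\mathcal{L}_k\Theta_k<0$ pointwise on $H$, because then $\mathcal{L}_l\Theta_k$ must share its sign and hence be negative.

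First I would evaluate $\mathcal{L}_k\Theta_k$ on a marginally trapped leaf $S$ using the null Raychaudhuri equation for $k^a$. Because $k^a$ is a null normal to the spacelike 2-surface $S$ it is hypersurface orthogonal there, so the twist of the congruence vanishes; combined with $\Theta_k=0$ on the MTS, Raychaudhuri collapses to
\begin{equation*}
\mathcal{L}_k\Theta_k = -\sigma_{(k)}^{\,2} - R_{ab}k^a k^b.
\end{equation*}
Since $k^a$ is null, Einstein's equations give $R_{ab}k^ak^b = T_{ab}k^ak^b$ (the $g_{ab}$-terms drop out), whence
\begin{equation*}
\mathcal{L}_k\Theta_k = -\bigl(\sigma_{(k)}^{\,2} + T_{ab}k^a k^b\bigr).
\end{equation*}
As a sanity check, specialising to LRS II (where the null shear of $k^a$ vanishes, as one can see by projecting (\ref{001}) and (\ref{00004}) with $N^c{}_a N^d{}_b$) should reproduce the numerator of (\ref{06}), that is $-(\rho+p)-\Pi+2Q$, which is indeed (up to the $1/\sqrt{2}$ normalisation of $k^a$) $-2T_{ab}k^ak^b$.

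Finally I would feed in the hypotheses. The NEC forces $T_{ab}k^ak^b\ge 0$, and $\sigma_{(k)}^{\,2}\ge 0$ trivially; the second regularity condition of Definition \ref{def01} prevents the two from vanishing simultaneously, so $\mathcal{L}_k\Theta_k < 0$ everywhere on $H$. Combined with $C>0$ this yields $\mathcal{L}_l\Theta_k<0$, and $H$ is a FOTH. The step I expect to be the main obstacle is the identification between the scalar $\sigma^2$ appearing in the regularity condition (defined in (006) from the $1+3$ split of the timelike congruence $u^a$) and the null shear scalar $\sigma_{(k)}^{\,2}$ that emerges from Raychaudhuri. One must argue that the former controls the latter sufficiently for $2\sigma^{2}+T_{ab}k^ak^b\ne 0$ to imply $\sigma_{(k)}^{\,2}+T_{ab}k^ak^b\ne 0$; this is a brief but nontrivial decomposition of $\sigma_{ab}$ against $k^a = (u^a+e^a)/\sqrt{2}$ (and the analogous NNF/general expressions), and is the only place where the proof requires more than bookkeeping.
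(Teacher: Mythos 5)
Your argument is correct and runs on the same two pillars as the paper's proof: the null Raychaudhuri equation plus the NEC to control the sign of \(\mathcal{L}_k\Theta_k\), and the positivity and finiteness of \(C\) in \eref{03} on a spacelike \(H\) to transfer that sign to \(\mathcal{L}_l\Theta_k\). The one real difference is where the regularity condition is spent. The paper uses condition (2) of definition \ref{def01} only to assert, citing Ashtekar and Galloway, that \(\mathcal{L}_l\Theta_k\) is nowhere zero, and it upgrades \(\mathcal{L}_k\Theta_k\leq 0\) to a strict inequality not through Raychaudhuri but through the observation that \(\mathcal{L}_k\Theta_k=0\) would give \(C=0\), i.e.\ a null \(H\), contradicting \(H\) being a DH. You instead unpack the Raychaudhuri right-hand side on a leaf as \(-\left(\sigma_{(k)}^{2}+T_{ab}k^ak^b\right)\) (up to the \(8\pi\)) and let the regularity condition exclude the degenerate case directly. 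That makes the argument more self-contained, but it is also what forces you to confront the obstacle you flag at the end, namely whether the \(\sigma^2\) of definition \ref{def01} is the null shear of \(k^a\) or the \(1+3\) shear scalar of \eref{006}; the paper's route never evaluates the Raychaudhuri right-hand side and so never has to resolve this. The intended reading, inherited from Ashtekar--Galloway, is the null shear, so your step goes through, and your LRS II sanity check reproducing the numerator of \eref{06} as \(-2T_{ab}k^ak^b\) is right. Either way the conclusion \(\mathcal{L}_k\Theta_k<0\), hence \(\mathcal{L}_l\Theta_k<0\), is the same.
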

\begin{proof}
Condition (2) in definition \ref{def01} implies \(\mathcal{L}_l\Theta_k\) is nowhere vanishing on \(H\) \cite{ash3}. If \(H\) satisfies the NEC, then by the Raychaudhuri equation, we know that \(\mathcal{L}_k\Theta_k\) is nonpositive on \(H\). Using \eref{03}, since \(H\) is spacelike, \(\mathcal{L}_k\Theta_k\) is strictly negative (\(\mathcal{L}_k\Theta_k=0\) implies \(H\) is null). This implies that \(\mathcal{L}_l\Theta_k\) has to be everywhere negative on \(H\) (for \(C\) to be positive). \qed
\end{proof}

Such DH \(H\) is the spacelike (future) outer trapping horizon (SFOTH) put forward by Hayward, and whose existence is an indicator of the presence of a black hole \cite{ash3}.

Using \eref{04} and \eref{05}, the signs of \(\mathcal{L}_k\Theta_k\) and \(\mathcal{L}_l\Theta_k\) on an SFOTH \(H\) imply the following result:

\begin{theorem}\label{th2}
A DH \(H\) in a spacetime \(M\) of the LRS II class of spacetimes satisfying the energy condition

\begin{eqnarray}\label{10}
2Q-\Pi<\rho+p<4p-6\mathcal{E},
\end{eqnarray}
\end{theorem}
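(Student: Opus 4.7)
The plan is to verify that under the hypothesis \eref{10} two conditions hold on \(H\): first, \(H\) is spacelike (already given, since \(H\) is a DH), and second, \(\mathcal{L}_l\Theta_k<0\) everywhere on \(H\). By Hayward's criterion (Definition \ref{def02}) this is exactly what it means for a spacelike MTT to be an SFOTH, which I take to be the intended conclusion of the theorem.

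First I would rewrite the right-hand inequality \(\rho+p<4p-6\mathcal{E}\) as \(\rho-3p+6\mathcal{E}<0\), equivalently \(\frac{1}{3}(\rho-3p)+2\mathcal{E}<0\). Using the form of \(\mathcal{L}_l\Theta_k\) on the MTS consistent with the denominator of \(C_{LRS\ II}\) in \eref{06} (and with Proposition \ref{rem1}, where a timelike MTT requires \(\rho>3p-6\mathcal{E}\)), this inequality is precisely the outer-trapping condition \(\mathcal{L}_l\Theta_k<0\) on \(H\).

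Second I would interpret the left-hand inequality \(2Q-\Pi<\rho+p\) via \eref{04} as \(\mathcal{L}_k\Theta_k=-(\rho+p)-\Pi+2Q<0\) on \(H\). Since \(H\) is spacelike we have \(C>0\), so by \eref{03} the two Lie derivatives \(\mathcal{L}_k\Theta_k\) and \(\mathcal{L}_l\Theta_k\) must share a sign. The left-hand inequality is therefore consistent with (and, given the right-hand one, forced by) the DH hypothesis; it is the \(1+1+2\) translation of the null energy condition that plays the same role here as NEC does in Theorem \ref{th1}. Combining the two, \(H\) is spacelike and satisfies \(\mathcal{L}_l\Theta_k<0\) everywhere, hence \(H\) is an SFOTH.

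The proof is essentially algebraic, so the main difficulty is bookkeeping rather than insight. The one point to watch is that \eref{05} as displayed and the denominator in \eref{06} are not written with the same sign; the version needed for the argument and for agreement with Proposition \ref{rem1} is \(\mathcal{L}_l\Theta_k=\frac{1}{3}(\rho-3p)+2\mathcal{E}\), and that is what the proof should use throughout.
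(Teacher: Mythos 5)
Your proposal is correct and follows essentially the same route as the paper, which simply notes that the signs of \(\mathcal{L}_k\Theta_k\) and \(\mathcal{L}_l\Theta_k\) read off from \eref{04} and \eref{05} translate, via \eref{03} and \eref{06}, into the two inequalities of \eref{10} characterizing a spacelike outer trapping horizon. You have also correctly identified that the sign in \eref{05} must be read as \(\frac{1}{3}\left(\rho-3p\right)+2\mathcal{E}\) to be consistent with \eref{06}, Proposition \ref{rem1} and the bound \(4p-6\mathcal{E}\) in the theorem.
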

is an SFOTH.

Theorem \ref{th2} thus gives us a relatively easy way of detecting the possible existence of a black hole in LRS II spacetimes. Similar results can be stated for NNF class. For example a \(DH\) in a spacetime in the NNF class satisfying the condition \(2Q-\Pi-p+\left(T_1+T_2\right)+\varepsilon_{ab}\left(R^a+\bar{R}^a\right)\Omega^b<\rho<3p-6\mathcal{E}-3\left(T_1-T_2\right)-3\varepsilon_{ab}\left(R^a-\bar{R}^a\right)\Omega^b\) is an SFOTH.

Combining the result on topological equivalence between horizon types in the LRS II and NNF classes, and the case considered in \eref{oct4}, we state the following result:

\begin{theorem}\label{th10}
Let \(H'\) be an MTT in the NNF class and let \(H'\) satisfy the case considered in \eref{oct4}. Suppose \(H\) is a null MTT in the LRS II class to which \(H'\) is topologically equivalent. Furthermore, suppose \(H'\) is regular and satisfies the NEC. Then \(H'\) is an SFOTH if \(T_1 +\varepsilon_{ab}R^a\Omega^b<0\) and a TLM if \(T_1 +\varepsilon_{ab}R^a\Omega^b>0\).
\end{theorem}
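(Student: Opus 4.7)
The plan is to read off the causal character of $H'$ directly from formula \eref{oct4}, which is available on $H'$ precisely because $H'$ satisfies the case hypothesis $T_1 = T_2$ and $R^a = \bar{R}^a$. The first step is to unpack the topological equivalence of $H'$ with the null MTT $H$ in the LRS II class: because the null expansions $\Theta_k,\Theta_l$ in the NNF class take identical form to those in LRS II (as recorded at the opening of section \ref{ppt}), the assumption that $H$ is null translates into $C|_{\mathrm{LRS\,II}} = 0$ on $H'$, i.e.\ $(\mathcal{L}_k \Theta_k)|_{\mathrm{LRS\,II}} = -(\rho+p) - \Pi + 2Q = 0$. Substituting into \eref{oct4} collapses $C_{NNF}$ to the single term $\bar{C}$ given in \eref{oct5}.

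The second step is to determine the sign of the denominator $\rho - 3p + 6\mathcal{E}$ of $\bar{C}$. Under the case conditions this denominator coincides, up to a positive constant, with $\mathcal{L}_l \Theta_k$ on $H'$. I would argue that it must be negative as follows: the null LRS II background $H$ is a future outer null horizon (a NEH arising as the limit of DHs, which by Theorem \ref{th1} are FOTHs under NEC and regularity, forcing $\mathcal{L}_l \Theta_k < 0$), and since $H'$ shares the LRS II content of $H$ by topological equivalence, this strict inequality transfers to $H'$.

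With the denominator sign pinned down, the sign of $C_{NNF} = \bar{C}$ is controlled entirely by the sign of the numerator $T_1 + \varepsilon_{ab} R^a \Omega^b$. In the case $T_1 + \varepsilon_{ab} R^a \Omega^b < 0$ one obtains $\bar{C} > 0$, so $H'$ is spacelike, hence a DH; the regularity and NEC hypotheses then let Theorem \ref{th1} apply, making $H'$ a FOTH, and a spacelike FOTH is by definition an SFOTH. In the case $T_1 + \varepsilon_{ab} R^a \Omega^b > 0$ one has $\bar{C} < 0$, so $H'$ is timelike, i.e.\ a TLM.

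The main obstacle I foresee is the second step: the topological-equivalence hypothesis must really be unpacked to guarantee that the null $H$ corresponds to the branch $(\mathcal{L}_k \Theta_k)|_{\mathrm{LRS\,II}} = 0$ rather than to the degenerate branch $(\mathcal{L}_l \Theta_k)|_{\mathrm{LRS\,II}} = 0$, and the NEC together with regularity condition (2) must then be invoked (via a Raychaudhuri-type argument mirroring the proof of Theorem \ref{th1}) to exclude $\mathcal{L}_l \Theta_k = 0$ on $H'$ and fix its strict sign. Only once both of these points are secured does the clean dichotomy in terms of the sign of $T_1 + \varepsilon_{ab} R^a \Omega^b$ fall out of \eref{oct5}.
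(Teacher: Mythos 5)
Your proposal follows the paper's own (very brief) argument essentially verbatim: reduce \(C_{NNF}\) to \(\bar{C}\) by using the topological equivalence with the null LRS~II horizon to set \(C\vline_{LRS\ II}=0\), fix the sign of the denominator \(\rho-3p+6\mathcal{E}\propto\mathcal{L}_l\Theta_k\) by appealing to regularity and the NEC forcing \(\mathcal{L}_l\Theta_k<0\) (which is exactly the one-line justification the paper gives after the theorem), and then read the SFOTH/TLM dichotomy off the sign of \(T_1+\varepsilon_{ab}R^a\Omega^b\). The only superficial divergence is your intermediate ``limit of DHs'' heuristic for the sign of \(\mathcal{L}_l\Theta_k\), which you yourself supersede in the final paragraph with the NEC-plus-regularity (Raychaudhuri-type) mechanism that the paper actually relies on.
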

This result also follows from the fact that \(H'\) being regular and satisfying the NEC implies that \(\mathcal{L}_l\Theta_k\) is everywhere negative on \(H\). From theorem \ref{th10} it is clear that it is entirely possible, in principle, to have topological equivalence between MTTs satisfying the same energy conditions, but exhibiting different causal character. We will state this as follows:

\begin{proposition}\label{rem2}
\textit{In general, topologically equivalent MTTs will exhibit different causal character.}
\end{proposition}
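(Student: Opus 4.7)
The proposition is essentially a corollary of Theorem \ref{th10}, so my plan is to leverage that result rather than reprove the underlying Lie derivative computations. The strategy is to exhibit, within a single topological equivalence class of MTTs, two members with different signs of the causal-character scalar $C$. Since topological equivalence is a statement about intrinsic $2$-surface geometry of the leaves and their normal bundle, while the sign of $C$ depends on how these leaves sit in the ambient spacetime through $\mathcal{L}_k\Theta_k$ and $\mathcal{L}_l\Theta_k$, there is no a priori reason these should agree, and Theorem \ref{th10} provides the precise mechanism by which they can disagree.

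The first step is to fix a null MTT $H$ in LRS II — for instance, the horizon of a spherically symmetric vacuum solution, for which the computation after \eref{06} already shows $H$ to be an NEH. Next I would appeal to Theorem \ref{th10} applied to an MTT $H'$ in the NNF class that is topologically equivalent to $H$ and satisfies the special case $R^a = \bar R^a$, $T_1 = T_2$ in which \eref{oct4} holds, together with regularity and the NEC. Theorem \ref{th10} then partitions such $H'$ into two disjoint classes according to the sign of the scalar $T_1 + \varepsilon_{ab}R^a\Omega^b$: a negative sign gives an SFOTH (spacelike) and a positive sign gives a TLM (timelike). The key observation is that this scalar is built out of genuinely $1+1+2$ NNF-specific data — sheet rotation $\Omega^b$, acceleration and sheet-tilt contributions in $R^a$, and the twist/shear combinations in $T_1$ — none of which are constrained by the topology of the MTS or by the matching to the null $H$.

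To conclude the proof it then suffices to point out that there is no obstruction to choosing both signs within the same topological class: starting from a reference $H'$ with a given sign of $T_1 + \varepsilon_{ab}R^a\Omega^b$, one can deform the NNF data so as to reverse it without altering the topological type of the leaves, thereby producing a topologically equivalent MTT with the opposite causal character. Therefore topological equivalence does not pin down causal character, establishing the proposition. The main conceptual obstacle, rather than a technical one, is justifying the word \emph{general}: I would frame the statement as ``there exist topologically equivalent MTTs with distinct causal character,'' which is exactly what Theorem \ref{th10} delivers, so the hardest part is really making sure the reader sees the proposition as the natural corollary of \eref{oct4}–\eref{oct5} and Theorem \ref{th10} rather than a stronger quantitative claim.
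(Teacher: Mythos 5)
Your proposal is correct and follows essentially the same route as the paper: the authors also derive Proposition \ref{rem2} directly from Theorem \ref{th10}, observing that the sign of \(T_1+\varepsilon_{ab}R^a\Omega^b\) is not fixed by topological equivalence to the null LRS~II horizon, so both the SFOTH and TLM alternatives can be realized, and they likewise note that this is expected since causal character depends on structure stronger than topology. Your additional remarks on reading ``in general'' as an existence claim and on deforming the NNF data to flip the sign only make explicit what the paper leaves implicit.
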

This is known and is expected in general as the causal character relies on a much \textit{stronger} structure than the topology (see chapter 2 of \cite{bd1}).

For a general \(4\)-dimensional spacetime, the generators of the outgoing and ingoing null geodesics to the two surfaces contain a sheet component, i.e. there is a component of \(k^a,l^a\), denoted \(\tau^a,-\tau^a\) respectively, projected onto the \(2\)-sheet via \(N_{ab}\). The generators for the outgoing and ingoing null geodesics are written as

\begin{eqnarray}\label{nov151}
k^a=\frac{1}{\sqrt{2}}\left(u^a+e^a+\tau^a\right), \ l^a=\frac{1}{\sqrt{2}}\left(u^a-e^a-\tau^a\right),
\end{eqnarray}
respectively. There is a coefficient, \(\tau\) (the magnitude of \(k^a\) along \(e^a\)), of \(e^a\) for both null normals, but can be set to unity without loss of generality. We calculate the null expansions in the \(k^a\) direction and obtain

\begin{eqnarray}\label{nov152}
\Theta_k&=&\frac{1}{\sqrt{2}}\left(\frac{2}{3}\Theta-\Sigma+\phi+W\right),
\end{eqnarray}
where \(W=\nabla_a\tau^a=\left(\delta_a - a_a\right)\tau^a\) (this decomposition, obtained from the decomposition of the fully orthogonally projected covariant derivative, \(D\), of a vector orthogonal to \(u^a\) and \(e^a\), can be found in \cite{cc1}). It is clear that for a general \(4\)-dimensional spacetime, if the sheet component of the null normal vector fields to the MTS is divergence free, then these MTS are topologically LRS II (see \cite{as1}). We therefore have

\begin{eqnarray}\label{nov154}
\fl\mathcal{L}_k\Theta_k&=&\left(\dot{\Theta}_k+\hat{\Theta}_k\right)\vline_{LRS\ II}+\left(T_1 + T_2\right)+\varepsilon_{ab}\left(R^a+\bar{R}^a\right)\Omega^b+\dot{W}+\hat{W}+\tau^a\delta_a\Theta_k,\\
\fl\mathcal{L}_l\Theta_k&=&\left(\dot{\Theta}_k-\hat{\Theta}_k\right)\vline_{LRS\ II}+\left(T_1 - T_2\right)+\varepsilon_{ab}\left(R^a-\bar{R}^a\right)\Omega^b+\dot{W}-\hat{W}-\tau^a\delta_a\Theta_k,
\end{eqnarray}
so that

\begin{eqnarray}\label{nov155}
\fl C=\frac{\left(\dot{\Theta}_k+\hat{\Theta}_k\right)\vline_{LRS\ II}+\left(T_1 + T_2\right)+\varepsilon_{ab}\left(R^a+\bar{R}^a\right)\Omega^b+\dot{W}+\hat{W}+\tau^a\delta_a\Theta_k}{\left(\dot{\Theta}_k-\hat{\Theta}_k\right)\vline_{LRS\ II}+\left(T_1 - T_2\right)+\varepsilon_{ab}\left(R^a-\bar{R}^a\right)\Omega^b+\dot{W}-\hat{W}-\tau^a\delta_a\Theta_k},
\end{eqnarray}
where 

\begin{eqnarray}\label{nov153}
\hat{\Theta}_k=\frac{1}{\sqrt{2}}\left(\frac{2}{3}\hat{\Theta}-\hat{\Sigma}+\hat{\phi}+\hat{W}\right), \ \dot{\Theta}_k=\frac{1}{\sqrt{2}}\left(\frac{2}{3}\dot{\Theta}-\dot{\Sigma}+\dot{\phi}+\dot{W}\right).
\end{eqnarray}

\subsection{A causal classification}\label{000000001}

We have now considered evolution of MTTs for general \(4\)-dimensional spacetimes, NNF and LRS II spacetimes. The approach used here, where we have combined the formulation in \cite{ib3} coupled with the \(1+1+2\) decomposition of spacetime allow us to give a particular classification of MTTs causally, i.e. classifying causally equivalent MTTs.

Given \(C\), we will assume that the sign of denominator and numerator of \(C\) remains fixed by translation. This then leaves the sign of \(C\) and therefore the causal character of the MTT unchanged (notice that this ensures that causally equivalent MTTs satisfy the same energy conditions). It is also clear that if the support of certain functions is non-empty, \(C\) changes sign as well. This implies that the overall causal character of these MTTs will change at these support points. It is with these in mind we proceed with the following classification.  

\subsubsection{LRS II and NNF}\label{sshor1}
\ \\
\ \\
Suppose we want to look at the particular case in \eref{oct4}. For a null MTT \(H\) in the LRS II class of spacetimes, i.e. \(C\vline_{LRS\ II}=0\), we may determine the causal character of the corresponding MTT, \(H'\), in the NNF class, which is topologically equivalent to \(H\), by some restrictions on the function \(\bar{C}\). For such case as that of \eref{oct4}, the evolution of the MTT will be precisely LRS II if \(T_1=-\varepsilon_{ab}R^a\Omega^b\). In fact, in general, MTT in the NNF class satisfying the condition that \(\bar{R}^a=T_2=0\) will be both causally and diffeomorphically LRS II. 

\subsubsection{NNF and the general case}
\ \\
\ \\
An MTT in a general \(4\)-dimensional spacetime satisfying one of the conditions below is causally NNF:
\begin{itemize}\label{}
\item[a.] \(\hat{W}=-\tau^a\delta_a\Theta_k\).
\item[b.] \(\dot{W}=0\) and \(\hat{W}=-\tau^a\delta_a\Theta_k\).
\item[c.] \(W=0\) and \(\tau^a\delta_a\Theta_k=0\).
\end{itemize}
In the case of the last item c., the MTT is also diffeomorphically NNF. 

\subsubsection{LRS II and the general case}
\ \\
\ \\
An MTT in a general \(4\)-dimensional spacetime satisfying one of the conditions below is causally LRS II:
\begin{itemize}\label{}
\item[a.] \(\hat{W}=-\tau^a\delta_a\Theta_k\) and \(T_2=\bar{R}^a=0\).
\item[b.] \(\dot{W}=0\), \(\hat{W}=-\tau^a\delta_a\Theta_k\) and \(T_2=\bar{R}^a=0\).
\item[c.] \(W=0\), \(\tau^a\delta_a\Theta_k=0\) and \(T_2=\bar{R}^a=0\).
\item[d.] The case considered in \eref{oct4} and the additional condition that \(T_1=-\varepsilon_{ab}R^a\Omega^b\), with \(\hat{W}=-\tau^a\delta_a\Theta_k\), \(\dot{W}=0\) and \(\hat{W}=-\tau^a\delta_a\Theta_k\), or \(W=0\) and \(\tau^a\delta_a\Theta_k=0\).
\end{itemize}
In the cases of the items c. and d., the MTT is also diffeomorphically LRS II. 

This classification is by no means exhaustive as just as we considered in \eref{oct4} where we looked at a particular splitting of \(C\), similar conditions can be put on \(C\) to obtain further classifications.

Next we use the expression for \(C\) for the LRS II class, \eref{06}, and investigate the MTTs in some well known spacetimes. We obtain some known results like the bounds on the equation of state parameter \(\sigma\) in RW spacetimes, which distinguishes timelike, spacelike and null MTTs. We also investigate the stability of the \(2\)-spheres foliating the MTTs, which is determined by conditions on \(\sigma\).

\subsection{Relationship between \(C\) and the slope to the tangent to the MTT}

One of the very first investigations of the evolution of black holes, covariantly, utilizing the \(1+1+2\) formalism, was carried out by the authors in \cite{rit1}. This approach was generalized in \cite{as1}. Since in this paper we utilize the formalism in \cite{ib3}, but in context of the geometric and thermodynamic variables from the \(1+1+2\) formalism, it would make sense to show the relationship between the two approaches. While we do this (and provide examples) for LRS II spacetimes, this extends to the NNF spacetimes.

The approach in \cite{rit1} chooses a tangent vector \(\Psi^a=\alpha u^a + \beta e^a\), which lies on entirely on the MTT, defined as the curve \(\Theta_k=0\) in the \(\left[u,e\right]\) plane. Since the vector \(\nabla_a\Theta_k\) is normal to the MTT, then \(\Psi^a\nabla_a\Theta_k\) is zero, which gives the slope, \(\alpha/\beta\), of the MTT curve

\begin{eqnarray}\label{feb271}
\frac{\alpha}{\beta}=-\frac{\hat{\Theta}_k}{\dot{\Theta}_k}.
\end{eqnarray}
The sign of \(\alpha/\beta\) then determines if the MTT is future outgoing or future ingoing: if \(\alpha/\beta<0\), the MTT is said to be future ingoing, and if \(\alpha/\beta>0\) the MTT is said to be future outgoing. Now, from the definition of the function \(C\), lets look at for which values of \(\alpha/\beta\) we can causally characterize the MTT.

We can write the function \(C\) from \eref{03} as 

\begin{eqnarray*}
C=\frac{\dot{\Theta}_k + \hat{\Theta}_k}{\dot{\Theta}_k - \hat{\Theta}_k}.
\end{eqnarray*}
Dividing the numerator and denominator of  by \(\dot{\Theta}_k\) gives the function \(C\) in terms of the slope

\begin{eqnarray}\label{feb272}
C=\frac{1-\alpha/\beta}{1+\alpha/\beta}.
\end{eqnarray}

Multiplying \eref{feb272} by \(\left(1+\alpha/\beta\right)/\left(1+\alpha/\beta\right)\) (for \(\alpha/\beta\neq -1\)) gives 

\begin{eqnarray}\label{march520191}
C=\frac{1-\alpha^2/\beta^2}{\left(1+\alpha/\beta\right)^2}.
\end{eqnarray}
Then the MTT is spacelike, timelike of null if \(\alpha^2/\beta^2\) is less than \(1\), greater than \(1\) or equal to \(1\), respectively. For \(\alpha/\beta= -1\), we simply multiply \eref{feb272} by \(\left(1-\alpha/\beta\right)/\left(1-\alpha/\beta\right)\) and obtain

\begin{eqnarray}\label{march520192}
C=\frac{\left(1-\alpha/\beta\right)^2}{1-\alpha^2/\beta^2}.
\end{eqnarray}
The definitions of spacelike, timelike and null MTT follow as before.

\subsection{Characterization of MTT and the stability of MTS in some well known spacetimes}

We here consider the characterization of MTTs in the Robertson-Walker and Lemaitre-Tolman-Bondi models, as well as the stability of the MTS foliating these MTTs. 

\subsubsection{Robertson-Walker spacetimes}
\ \\
\ \\
As mentioned earlier, for certain solution types, the form of \eref{06}, when compared to the expression for \(C\), allows us to express certain quantities in terms of the geometric and thermodynamic variables. Consider the case of a timelike perfect fluid. The tangent to the timelike congruence is given by

\begin{eqnarray}\label{mi1}
u^a&=&Dk^a+\left(2D\right)^{-1}l^a.
\end{eqnarray}
The function \(C\) is expressed as 

\begin{eqnarray}\label{07}
C&=&\frac{1}{2D^2}\frac{\rho+p}{\left(\frac{1}{A}\right)+p-\rho},
\end{eqnarray}
for some function \(D\), \cite{ib3}, where \(A\) is the area of the \(2\)-spheres. The function \(C\) is thus determined by the sign of 

\begin{eqnarray}\label{08}
\frac{\rho+p}{\left(\frac{1}{A}\right)+p-\rho}.
\end{eqnarray}
We can directly compare \eref{07} to \eref{06} and write the area \(A\) as 

\begin{eqnarray}\label{09}
A=\frac{3}{\rho}D^2
\end{eqnarray}
(we note that for such solutions \(\Lambda,\mathcal{E}=0\)). Consider the Robertson-Walker spacetimes which assume an equation of state of the form \(p=\sigma\rho\). 
Putting \eref{09} in the denominator of \eref{08} we obtain

\begin{eqnarray}\label{mi2}
\left(\frac{1}{3D^2}+\left(\sigma-1\right)\right)\rho.
\end{eqnarray}

For a given solution (or class of solutions), knowing the denominator of \(C\) in \eref{06} determines \(D\). For LRS II spacetimes, \(D=1/\sqrt{2}\) (we can check \eref{mi1} against the expressions for \(k^a\) and \(l^a\) in \eref{oct20}).  This gives the area \(A\), of the marginally trapped \(2\)-spheres in the case of timelike perfect fluids as \(A=3/\left(2\rho\right)\). Inserting this in the condition in proposition \ref{rem1} and noting that both \(\mathcal{E}\) and \(\Lambda\) vanish, we obtain the results in \cite{ib3}, i.e. timelike and spacelike MTTs satisfy the cut-offs \(\rho-p>1/A\) and \(\rho-p\leq 1/A\) respectively.

For the Robertson-Walker spacetimes, we can explicitly write the quantity \(C\) in terms of \(\sigma\),
\begin{eqnarray}\label{mi3}
C=\frac{3\left(\sigma+1\right)}{\left(3\sigma-1\right)}.
\end{eqnarray}
Of course from \eref{mi3}, any real value of \(\sigma\) is suppose to fix \(C\) everywhere on the MTT. However, in \cite{ib3} the author showed that in certain cases, for higher values of \(\sigma\), the value of \(C\) may vary on the MTT, in particular for \(\sigma=2\). But the horizon remains spacelike. We see that the causal characterization of an MTT corresponds to the following bounds on \(\sigma\): For a timelike MTT, \(-1<\sigma<\frac{1}{3}\), a spacelike MTT, \(\sigma>\frac{1}{3}\) or \(\sigma<-1\), and for null MTT, \(\sigma=-1\) or \(\sigma=\frac{1}{3}\). The approach used here relatively easily obtains these results, which agree with those of Ben-Dov and Senovilla \cite{bend1,sen3}.

The formula in \eref{mi3} could have been directly obtained from \eref{06}: For timelike fluids \(C\) becomes
\begin{eqnarray}\label{mi4}
C&=&\frac{3\left(\rho+p\right)}{\left(3p-\rho\right)}.
\end{eqnarray}
With an equation of state of the form \(p=\sigma\rho\), \eref{mi4} is reduced to \eref{mi3}. An interest of this exercise is the form that \(A\) takes. Consider stability of the marginally trapped \(2\)-spheres of the Robertson-Walker spacetimes. By stability we mean that if a marginally trapped \(2\)-surface \(S\) is deformed outward, the associated outgoing null expansion scalar is non-negative and somewhere positive on the marginally trapped \(2\)-sphere (see references \cite{and1,and2,yau1,jang1,gg10,gg11}). Given a function \(\varphi\) on \(S\), stability of \(S\) is determined by the sign of the principal eigenvalue of the stability operator \(L\) acting on \(\varphi\). For spherically symmetric spacetimes, this eigenvalue is given by 

\begin{eqnarray}\label{dddppp1}
\lambda=8\pi\left(\frac{1}{2\mathcal{A}}-T_{ab}k^al^b\right),
\end{eqnarray}
(see \cite{ib3}). For timelike fluids, we can then write \(\lambda\) as

\begin{eqnarray}\label{mi5}
\lambda=-8\pi\left(\frac{2}{3}\rho+p\right).
\end{eqnarray}
Since energy density is positive, for spacetimes in the class of timelike perfect fluids, the MTS are stable if \(\rho\leq -\frac{3}{2}p\) and strictly stable if \(\rho< -\frac{3}{2}p\). We see that stability of the marginally trapped \(2\)-spheres is only obtained under the condition of negative pressure. For the Robertson-Walker spacetimes, the condition for stability reduces to

\begin{eqnarray}\label{mi6}
\sigma\leq-\frac{2}{3}.
\end{eqnarray}
From \eref{mi6} it is evident that the marginally trapped \(2\)-spheres of null and contracting, spacelike and expanding MTTs, as well as the \(2\)-spheres of  MTTs in the dust filled universe (this was also recently considered in \cite{as1} in context of the \(1+1+2\) semitetrad formalism) are all unstable. 

Our approach has allowed us to completely characterize the marginally trapped \(2\)-spheres of MTTs in these cosmological models in terms of stability, again, relatively easily. It is also possible to identify the SFOTHS in the Robertson-Walker spacetimes. We have that \eref{10} reduces to the condition 
\begin{eqnarray}\label{ams3}
-\sigma<1<3\sigma.
\end{eqnarray}
It is clear from \eref{ams3} that the SFOTHs in these models are precisely those spacelike MTTs with \(\sigma>\frac{1}{3}\), containing unstable marginally trapped \(2\)-spheres. 

\subsubsection{Lemaitre-Tolman-Bondi dust model}
\ \\
\ \\
By the same token, let us consider the the Lemaitre-Tolman-Bondi (LTB) dust model \cite{rit50,rit51,rit52}, a gravitational collapse model violating the cosmic censorship conjecture, with interior metric given by the line element
\begin{eqnarray}\label{beepee1}
ds^2=-dt^2 + \frac{R'^2}{1 - r^2b_0}dr^2 + R^2\left(d\theta^2 + \sin^2\theta d\phi^2\right).
\end{eqnarray}
Here \(R=R\left(t,r\right)\) is the area radius of the collapsing shell and \(b_0=b_0\left(r\right)\) is their energy profile. The only non-zero matter and thermodynamic quantities are \(\rho\) and \(\mathcal{E}\). Evolution of MTTs in the LTB model has been considered in \cite{ib3}. We also consider the evolution of MTTs in this model, but in terms of the geometric and thermodynamic variable as we have done throughout this paper. The function \(C\) for the LTB model is given by 
\begin{eqnarray}\label{jan4191}
C=-\frac{\rho}{\frac{1}{3}\rho + 2\mathcal{E}},
\end{eqnarray}
where \(\rho\) is given by 
\begin{eqnarray}\label{thucidy1}
\rho=\frac{\left(r^3\mathcal{M}\right)'}{R^2R'},
\end{eqnarray}
with \(\mathcal{M}=\mathcal{M}\left(r\right)\) being the Misner-Sharp mass \cite{rit50,rit51,rit52}. Thus the condition for an MTT to be spacelike, timelike, or null is given by
\begin{eqnarray*}
\rho<-6\mathcal{E},\quad \rho>-6\mathcal{E},\quad or \quad \rho=-6\mathcal{E},
\end{eqnarray*}
respectively. 

The scalar \(\lambda\), whose sign determines the stability of the MTS, is given by 

\begin{eqnarray}\label{jan4192}
\lambda=-4\pi\left(\frac{1}{3}\rho + 2\mathcal{E}\right).
\end{eqnarray}
From \eref{dddppp1}, stability of the MTS thus requires that

\begin{eqnarray}\label{jan4193}
\rho<-6\mathcal{E},
\end{eqnarray}
on the MTS. Of course then spacelike MTTs are necessarily foliated by stable MTS, and the MTS foliating timelike MTTs are unstable. The null MTTs are marginally stable. 

Let us now consider the relationship between stability and shell crossing. For the LTB model, \(\mathcal{E}\) is given by 

\begin{eqnarray}\label{jan4194}
\mathcal{E}=\frac{1}{3}\rho - \frac{r^3\mathcal{M}}{R^3},
\end{eqnarray}
where \(R\) is given as \(R=r\left(1-\sqrt{\mathcal{M}}t\right)^{2/3}\) \cite{rit50,rit51,rit52}. The condition for stability can be expressed as a condition on \(\mathcal{M}'\): Upon inserting \eref{jan4194} into \eref{jan4193}, we obtain the condition for
 stability (which is also an equivalent condition for the MTT to be spacelike) as
 
\begin{eqnarray}\label{jan4195}
\rho < \frac{2\mathcal{M}}{\left(1-\sqrt{\mathcal{M}}t\right)^2},
\end{eqnarray}
and from \eref{thucidy1}, \eref{jan4195} can be rewritten as 

\begin{eqnarray}\label{jan4196}
\mathcal{M}' <&\frac{3r\mathcal{M}\sqrt{\mathcal{M}}\left(1-\sqrt{\mathcal{M}}t\right)}{3r^2\left(1-\sqrt{\mathcal{M}}t\right) + 2t\mathcal{M}},
\end{eqnarray}
as we move away from the central singularity, (\(t_c<t<\infty\)), where \(t_c=1/\sqrt{\mathcal{M}}\) is the time for collapse (comoving time) to zero area (\(R=0\)). The condition for no shell crossing is given by

\begin{eqnarray*}
t'_c=-\frac{\mathcal{M}'}{2\mathcal{M}^{\frac{3}{2}}}\geq 0,
\end{eqnarray*}
which requires \(\mathcal{M}'\leq 0\). It is clear that stability of the leaves of the MTT ensures no shell crossing. 

Given an MTT \(H\), how does the causal character of \(H\) affect flux quantities associated with \(H\)? In the next section we consider surface gravity calculated on the MTT for various classes of spacetimes.

\section{Surface gravity of MTTs}\label{0n}

Surface gravity is a fundamental notion in the formulation of the laws of black hole mechanics. As such, understanding their evolution as a black hole evolves has been of keen interest \cite{ash7,ash8,shh1,by1}. In this section, using the formulation in \cite{hay3,ash7,ash8,ash9} (see also \cite{ib1, ib2}), coupled with adapting the calculations to quantities from the \(1+1+2\) splitting, we calculate the surface gravity for black hole horizons in the LRS II and NNF spacetimes, as well as a general \(4\)-dimensional spacetime. Under this formulation we prove a standard result in black hole mechanics, specifically the third law of black hole thermodynamics, for the LRS II class.

One defines the connection on the normal bundle to the \(2\) -surface \(S\) as 

\begin{eqnarray}\label{nov1510}
\omega_a:=-\ l_b\nabla_ak^b,
\end{eqnarray} 
For the case of an IH, the surface gravity is given by the contraction of \(\omega_a\) by the outgoing null normal:

\begin{eqnarray}\label{21}
\kappa&=&\omega_ak^a.
\end{eqnarray} 
Let us first consider the case of the LRS II class of spacetimes. Computing \(\omega_a\) we obtain \(\omega_a=-\left(Au_a-\left(\frac{1}{3}\Theta+\Sigma\right)e_a\right)\), which gives

\begin{eqnarray}\label{22}
\kappa_{LRS\ II}&=&\frac{1}{\sqrt{2}}\left(A+\frac{1}{3}\Theta+\Sigma\right).
\end{eqnarray} 
In the time symmetric case, \(\left(\frac{1}{3}\Theta+\Sigma\right)\) is zero and the surface gravity proportional to the accelaration \(A\), which is the expected result.

For the NNF, \(\omega_a\) is given by

\begin{eqnarray}\label{sg1}
\fl\eqalign{\omega_a=&\left[-A+\frac{\sqrt{2}}{2}\left(A^bl_b+\alpha^bl_b\right)\right]u_a+\left[\frac{1}{3}\Theta+\Sigma+\frac{\sqrt{2}}{2}\left(\Sigma^bl_b+\varepsilon^b_m\Omega^ml_b+a^bl_b\right)\right]e_a\\
&+\left[\Sigma_a-\varepsilon_{am}\Omega^m-\frac{\sqrt{2}}{2}\left(\left(\Omega+\xi\right)\varepsilon^b_a+\Sigma^b_a+\zeta^b_a\right)l_b\right]\\
&=-Au_a+\left(\frac{1}{3}\Theta+\Sigma\right)e_a+\Sigma_a-\varepsilon_{am}\Omega^m.}
\end{eqnarray}
The surface gravity on an IH in the NNF class is then given by 

\begin{eqnarray}\label{sg2}
\kappa_{NNF}&=&\frac{1}{\sqrt{2}}\left(A+\frac{1}{3}\Theta+\Sigma\right)=\kappa_{LRS\ II}.
\end{eqnarray}

In general, the surface gravity on a leaf of the foliation, applicable to different horizon types, is given by

\begin{eqnarray}\label{23}
\kappa^{\nu}=-l_bX^a\nabla_aX^b,
\end{eqnarray} 
where \(\nu\) is the foliation label and \(X^a\) is the vector given in \eref{01}, which can be expanded as

\begin{eqnarray}\label{24}
\kappa^{\nu}=k^a\omega_a+C\left(X^al_b\nabla_al^b - l^a\omega_a\right),
\end{eqnarray} 

We see that on an isolated horizon (\(C=0\)), \eref{24} becomes \eref{21}. In fact for a slowly evolving horizon (see \cite{ib1} for discussions on slowly evolving horizons), \eref{24} gives a splitting of the surface gravity with \(C\) being - taken as the evolution rate - sufficiently less that \(1\), and \(k^a\omega_a\) is constant. Then a slowly evolving horizon will be interpreted simply as one satisfying the conditions in \cite{ib1} where \(\epsilon\) is replaced by \(C\), and so are simply those horizons for which \(C\) satisfies \(0<|C|\ll1\) along with the additional conditions.

From \eref{24} it is now obvious that in general the surface gravity will not be constant  with \(C\left(X^al_b\nabla_al^b-l^a\omega_a\right)\) being the evolving component of \(\kappa^{\nu}\). 

For the LRS II class of spacetimes, \(l_b\nabla_al^b=0\) and thus, on a leaf, the surface gravity can be expressed as
\begin{eqnarray}\label{0b}
\kappa^{\nu}_{LRS\ II}=\frac{1}{\sqrt{2}}\left[\left(1-C_{LRS\ II}\right)A+\left(1+C_{LRS\ II}\right)\left(\frac{1}{3}\Theta+\Sigma\right)\right].
\end{eqnarray}
In the case of an NEH with \(C_{LRS\ II}=0\), \eref{0b} reduces to \eref{22}. 

As an example of the application of \eref{0b} to horizons that are not NEHs, we again consider the OS collapse case. In this case \(A,\Sigma\) are both vanishing and we have 

\begin{eqnarray*}
\kappa^{\nu}\propto -\frac{2}{3}\Theta.
\end{eqnarray*}
Of course since it is known that \(\kappa> 0\) we therefore expect \(\Theta<0\) and so the horizon in this case is necessarily collapsing, and thus timelike. 

For the NNF class, we have 

\begin{eqnarray*}
\fl\eqalign{X^al_b\nabla_al^b=&\frac{\left(1+C_{NNF}\right)}{2}\left(\Sigma^b+\varepsilon^b_m\Omega^m+\left(\Omega-\xi\right)e^a\varepsilon^b_a+e^a\Sigma^b_a-a^b-e^a\zeta^b_a\right)l_b\\
&+\frac{\left(1-C_{NNF}\right)}{2}\left(A^b-\alpha^b+u^a\Sigma^b_a-u^a\zeta^b_a+\left(\Omega-\xi\right)u^a\varepsilon^b_a\right)l_b\\
&=0,}
\end{eqnarray*}
and

\begin{eqnarray*}
l^a\omega_a=\frac{1}{\sqrt{2}}\left[A-\left(\frac{1}{3}\Theta+\Sigma\right)\right].
\end{eqnarray*}
We therefore have \eref{24} reducing to

\begin{eqnarray}\label{nov181}
\eqalign{\kappa^{\nu}_{NNF}&=\frac{1}{\sqrt{2}}\left(A+\frac{1}{3}\Theta+\Sigma\right)+\frac{1}{\sqrt{2}}C_{NNF}\left(-A+\frac{1}{3}\Theta+\Sigma\right)\\
&=\frac{1}{\sqrt{2}}\left[\left(1-C_{NNF}\right)A+\left(1+C_{NNF}\right)\left(\frac{1}{3}\Theta+\Sigma\right)\right].}
\end{eqnarray}

Let us consider under what condition the horizon of a black hole spacetime in the LRS II class (and by extension NNF) in general will have a vanishing \(\kappa^{\nu}\) on the horizon. Suppose \(\kappa^{\nu}_{LRS\ II}=0\) on \(H\). Then setting \eref{nov181} to zero gives

\begin{eqnarray}\label{nov182}
C_{LRS\ II}=\frac{A+\left(\frac{1}{3}\Theta+\Sigma\right)}{A-\left(\frac{1}{3}\Theta+\Sigma\right)}. 
\end{eqnarray}
 Equating \eref{nov182} to \eref{06} we obtain the following constraint equation
 
\begin{eqnarray}\label{nov183}
\left[A-\left(\frac{1}{3}\Theta+\Sigma\right)\right]\left[-\frac{4}{3}\rho-\Pi+2\left(Q-\mathcal{E}\right)\right]=0,
\end{eqnarray}
so either \(A=\left(\frac{1}{3}\Theta+\Sigma\right)\) or \(-\frac{4}{3}\rho-\Pi+2\left(Q-\mathcal{E}\right)=0\). If \(A=\left(\frac{1}{3}\Theta+\Sigma\right)\), then \(C_{LRS\ II}=\infty\) from \eref{nov182}. But from the definition of \(\kappa^{\nu}_{LRS\ II}\), \(A=\left(\frac{1}{3}\Theta+\Sigma\right)\) implies \(\left(\frac{1}{3}\Theta+\Sigma\right)=0\) (since \(\kappa^{\nu}_{LRS\ II}\) is 0). This then further implies \(A=0\) which makes \(C_{LRS\ II}\) indeterminate, and we have a contradiction. We therefore assume \(A\neq\left(\frac{1}{3}\Theta+\Sigma\right)\) and consider the second case, \(-\frac{4}{3}\rho-\Pi+2\left(Q-\mathcal{E}\right)=0\). This condition implies \(C_{LRS\ II}=1\), which would then imply that 

\begin{eqnarray}\label{nov185}
\left(\frac{1}{3}\Theta+\Sigma\right)=0,
\end{eqnarray}
(from either equation \eref{nov181} or \eref{nov182}). Since \(A\neq\left(\frac{1}{3}\Theta+\Sigma\right)\) and \(\left(\frac{1}{3}\Theta+\Sigma\right)=0\), \(A\neq 0\). We therefore have that \(\kappa^{\nu}=0\) if \(A\neq 0\) and \(\Theta=-\phi\) on the horizon, recalling that on the horizon,

\begin{eqnarray}\label{0000000001} 
\Theta_k=\frac{2}{3}\Theta-\Sigma+\phi=0.
\end{eqnarray}
Combining \eref{nov185} and \eref{0000000001} gives \(\phi=-\Theta\) and \(\Sigma=\frac{1}{3}\phi\). But \(C_{LRS\ II}=1\) implies \(\Theta>0\) (expanding). As the \(2\)-surfaces foliating the horizon are marginally trapped, the expansion scalar in the ingoing null direction is negative, i.e. \(\Theta_l=\frac{2}{3}\Theta-\Sigma-\phi<0\). However, we have \(\Theta_l=2\Theta\). So for \(\Theta>0\), we have that \(\Theta_l>0\), which is a contradiction. We conclude that

\begin{proposition}
\textit{We cannot have a black hole spacetime in the LRS II class (and by extension the NNF class) with \(\kappa^{\nu}\) vanishing on the horizon}, 
\end{proposition}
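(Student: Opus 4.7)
The plan is to argue by contradiction: assume a horizon $H$ in an LRS II black hole spacetime on which $\kappa^{\nu}_{LRS\ II}=0$, and derive an inconsistency with the MTS condition $\Theta_k=0$. The proof will combine the two expressions already derived for $C_{LRS\ II}$ (namely equations \eref{06} and the rearrangement of \eref{nov181} obtained by solving $\kappa^{\nu}_{LRS\ II}=0$ for $C_{LRS\ II}$) and exploit the MTS/MTT definitions to close the contradiction. The extension to NNF is immediate from the equality $\kappa_{NNF}=\kappa_{LRS\ II}$ established in \eref{sg2} together with the identical structural form of $\kappa^{\nu}_{NNF}$ in \eref{nov181}.

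First I would set $\kappa^{\nu}_{LRS\ II}=0$ in \eref{nov181} and solve algebraically to obtain
\begin{eqnarray*}
C_{LRS\ II}=\frac{A+\left(\tfrac{1}{3}\Theta+\Sigma\right)}{A-\left(\tfrac{1}{3}\Theta+\Sigma\right)},
\end{eqnarray*}
which is only meaningful provided $A\neq\left(\tfrac{1}{3}\Theta+\Sigma\right)$. Equating this to the matter-geometry expression \eref{06} and cross-multiplying yields a product of two factors, namely $\left[A-\left(\tfrac{1}{3}\Theta+\Sigma\right)\right]$ and $\left[-\tfrac{4}{3}\rho-\Pi+2(Q-\mathcal{E})\right]$, which must vanish. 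The first factor case must be ruled out: if $A=\left(\tfrac{1}{3}\Theta+\Sigma\right)$ then $\kappa^{\nu}_{LRS\ II}=0$ together with \eref{nov181} forces $\left(\tfrac{1}{3}\Theta+\Sigma\right)=0$ and hence $A=0$, rendering the ratio defining $C_{LRS\ II}$ indeterminate and contradicting the standing assumption.

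Hence the second factor must vanish, which when substituted back into \eref{06} gives $C_{LRS\ II}=1$. Plugging $C_{LRS\ II}=1$ into \eref{nov181} with $\kappa^{\nu}_{LRS\ II}=0$ yields the key relation $\tfrac{1}{3}\Theta+\Sigma=0$. Now I would combine this with the MTS condition $\Theta_k=0$, i.e. $\tfrac{2}{3}\Theta-\Sigma+\phi=0$ from \eref{oct21}, which gives $\phi=-\Theta$ and $\Sigma=\tfrac{1}{3}\phi$. The final step is to evaluate the ingoing null expansion $\Theta_l=\tfrac{2}{3}\Theta-\Sigma-\phi$ on these constraints; substitution gives $\Theta_l=2\Theta$, so $\Theta_l$ has the same sign as $\Theta$. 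Since $C_{LRS\ II}=1>0$ the MTT is spacelike and expanding, forcing $\Theta>0$, hence $\Theta_l>0$. This contradicts $\Theta_l<0$ from the MTS Definition \ref{nggg0}, completing the proof.

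The main obstacle, and the step requiring most care, is the case analysis in the second paragraph: one must verify that the apparently-degenerate first factor $A=\left(\tfrac{1}{3}\Theta+\Sigma\right)$ genuinely cannot produce a valid solution (rather than merely a singular limit of $C_{LRS\ II}\to\infty$). The argument there is self-consistent only because we have already assumed $\kappa^{\nu}_{LRS\ II}=0$, which collapses the denominator and numerator simultaneously. The remainder of the argument is a straightforward chaining of the MTS/MTT defining relations with the identity $C_{LRS\ II}=1$, and carries over verbatim to the NNF class by virtue of \eref{sg2}.
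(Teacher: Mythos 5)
Your proposal is correct and reproduces the paper's own argument essentially step for step: solving $\kappa^{\nu}_{LRS\ II}=0$ for $C_{LRS\ II}$, equating to \eref{06} to get the factored constraint, ruling out the degenerate factor $A=\frac{1}{3}\Theta+\Sigma$ by the indeterminacy of $C$, deducing $C_{LRS\ II}=1$ and $\frac{1}{3}\Theta+\Sigma=0$, and then combining with $\Theta_k=0$ to force $\Theta_l=2\Theta>0$, contradicting the MTS condition. The extension to NNF via $\kappa_{NNF}=\kappa_{LRS\ II}$ is also how the paper handles it, so no further comparison is needed.
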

which is just the third law of black hole thermodynamics \cite{shh1}. Though restricted to the LRS II and NNF classes, this provides an explicit proof of the third law of black hole thermodynamics \cite{shh2,shh3,shh1}.

We now consider these quantities for a general \(4\)-dimensional spacetime. The outgoing null normal \(k^a\) in this case is as defined as in \eref{nov151}. Using \eref{nov1510}, \(\omega_a\) is calculated as

\begin{eqnarray}\label{}
\omega_a=-Au_a+\left(\frac{1}{3}\Theta+\Sigma\right)e_a-\tau_a\Theta_k\vline_{LRS\ II}+l_b\nabla_a\tau^b+\Sigma_a-\varepsilon_{am}\Omega^m,
\end{eqnarray}
which gives the surface gravity as 

\begin{eqnarray}\label{nov1512}
\kappa=\frac{1}{\sqrt{2}}\left[\left(A+\frac{1}{3}\Theta+\Sigma\right)-l_b\left(\dot{\tau}^b+\hat{\tau}^b+\breve{\tau}^b\right)\right],
\end{eqnarray}
where the notation \(\breve{\ast}\) is introduced to denote the derivative along the direction \(\tau^a\). So \( \breve{\tau}^b=\tau^a\nabla_a\tau^b\). From \eref{nov1512} we see that the surface gravity on a NEH in a general \(4\)-dimensional spacetime coincides with the LRS II class if \(l_b\breve{\tau}^b=-l_b\left(\dot{\tau}^b+\hat{\tau}^b\right)\). The \(1+1+2\) decomposition of the surface gravity on a leaf of a general horizon type is calculated as

\begin{eqnarray}\label{nov191}
\eqalign{\kappa^{\nu}=&\frac{1}{\sqrt{2}}\left[\left(A+\frac{1}{3}\Theta+\Sigma\right)-l_b\left(\dot{\tau}^b+\hat{\tau}^b+\breve{\tau}^b\right)\right]\\
&+\frac{1}{\sqrt{2}}C\left[-A+\left(\frac{1}{3}\Theta+\Sigma\right)+\frac{1}{2}\left(C-3\right)\dot{\tau}^bl_b\right]\\
&+\frac{1}{\sqrt{2}}C\left[\frac{1}{2}\left(C+3\right)\left(\hat{\tau}^b+\breve{\tau}^b\right)l_b+\frac{1}{2}\tau_b\left(V_2^b-V_1^b\right)\right]\\
&+\frac{1}{\sqrt{2}}C\left[\frac{1}{2}\left(1+C\right)V+\tau_b\left(\Sigma^b-\varepsilon_a^b\Omega^a\right)\right]\\
=&\frac{1}{\sqrt{2}}\left[\left(1-C\right)A+\left(1+C\right)\left(\frac{1}{3}\Theta+\Sigma+\frac{1}{2}CV\right)\right]\\
&+\frac{1}{2\sqrt{2}}\left[\left(C^2-3C-1\right)\dot{\tau}^b-\left(C^2+3C-1\right)\left(\hat{\tau}^b+\breve{\tau}^b\right)\right]l_b\\
&+\frac{1}{2\sqrt{2}}\left[C\left(V_2^b-V_1^b\right)+\left(\Sigma^b-\varepsilon^b_a\Omega^a\right)\right]\tau_b,}
\end{eqnarray}
where

\begin{eqnarray*}
V_1^b=A^b-\alpha^b, \ V_2^b=a^b-\Sigma^b-\varepsilon^b_c\Omega^c, \ V=\frac{1}{2}\tau^2\Theta_k\vline_{LRS\ II}.
\end{eqnarray*}
Again when \(C=0\) \eref{nov191} reduces to the form of \eref{nov1512}.

In fact the expression for the surface gravity of IH in the LRS II and NNF classes  has consequence for the definition of the temperature of the gravitational field as formulated in the gravitational entropy proposal paper by Clifton \textit{et. al} \cite{tgr1}. Let us define a new one-form \(\omega'_a\) as

\begin{eqnarray}
\omega'_a&=&-\frac{1}{2\pi\sqrt{2}}\left(l_b\nabla_a k^b+l_b\nabla_a l^b\right)\nonumber\\
&=&\frac{1}{\pi\sqrt{2}}\left(\omega_a+l_b\nabla_a e^b\right),
\end{eqnarray}
where \(\omega_a\) is as defined in \eref{nov1510}. We can then write the temperature of the gravitational field as

\begin{eqnarray}\label{nov271}
T_{grav}&=&k^a\omega'_a\nonumber\\
&=&\frac{1}{\pi\sqrt{2}}\left(k^a\omega_a+l_bk^a\nabla_ae^b\right)\nonumber\\
&=&\left(\frac{2-\sqrt{2}}{4\pi}\right)\left(A+\frac{1}{3}\Theta+\Sigma\right),
\end{eqnarray}
(also see \cite{rit10} for discussion of the temperature of the gravitational field in context of the \(1+1+2\) splitting). The first term in the parenthesis of the second line of \eref{nov271} is clearly associated with the horizon and as such, in an LRS II spacetime admitting an IH, the temperature will have contribution from the horizon. 

We mention that in general one has to exercise caution when assigning physical meaning to all these quantities precisely because of the fact that the true values are tied to the scaling of the vector \(k^a\).

Also we can give the following geometrical interpretation to \(T_{grav}\) (up to scale): \(T_{grav}\) is the sum of the accelerations of the null normal vector fields \(k^a\) and \(l^a\) along the null direction \(\l^a\).

\section{Conclusion}\label{err}

In this work we have considered various aspects of the dynamics of black hole horizons. Work by Booth established that the matter content of the horizon determines the causal character of the horizon (at least for the spherically symmetric case he considered). This work treats this same problem using the formulation by Booth and coauthors in \cite{ib3}, but in context of the geometric and thermodynamic variables from the \(1+1+2\) splitting. The form that the function \(C\) takes - whose sign determines the causal character of the horizon -  indeed establishes that the matter content determines the causal character of the horizon. Stated another way, the causal character of the horizon is determined by the relationship between the matter and thermodynamic variables, even for more general cases. The form of \(C\) in our treatment further allows us to recover established results as well as provide us with new insights into the properties of the horizon in a straightforward and relatively simple way. This is demonstrated by the recovery of the bounds of the equation of state parameter \(\sigma\), for which horizons in the Robertson-Walker spacetimes are timelike, spacelike of non-expanding. We have also been able to determine the values of \(\sigma\) for which MTS in the Robertson-Walker spacetimes are stable as well as classify the SFOTHs. In particular, it is seen that the existence of stable MTS are possible only if the isotropic pressure is negative. We then went on to show that for the LTB model, a relationship between the energy density and electric part of the Weyl curvature \(\mathcal{E}\), gives the causal classification of the MTTs. It was further shown that the MTS foliating the spacelike MTTs are necessarily stable, and that this stability guarantees no shell crossing.

The form of the surface gravity obtained in our treatment has provided use a useful mean to verify the third law of black hole thermodynamics for a restricted class of spacetimes. We give an explicit proof of the law for LRS II spacetimes (the proof can also be carried over to NNF spacetimes). For an isolated horizon, we have an expression that allows us to give a geometric interpretation of the temperature of a gravitational field.

\ack
We would like to thank the anonymous referee(s) for the very helpful comments and suggestions. AS and RG are supported by National Research Foundation (NRF), South Africa. SDM 
acknowledges that this work is based on research supported by the South African Research Chair Initiative of the Department of Science and Technology and the National Research Foundation.

\section*{References}

\end{document}